\documentclass[11pt]{article}
\usepackage[margin=1in]{geometry}
\usepackage{amsmath,amssymb,amsthm}
\usepackage{hyperref}

\newtheorem{theorem}{Theorem}[section]
\newtheorem{proposition}[theorem]{Proposition}
\newtheorem{lemma}[theorem]{Lemma}
\newtheorem{definition}[theorem]{Definition}

\newcommand{\N}{\mathbb{N}}
\newcommand{\Syn}{\mathbf{Syn}}
\newcommand{\Exec}{\mathbf{Exec}}
\newcommand{\Beh}{\mathbf{Beh}}
\newcommand{\K}{K_U}
\DeclareMathOperator{\Run}{Run}

\title{Pushout Attachments and Conditional Complexity \\ of Executable Models}
\author{Alexander Kolpakov}
\date{\today}

\begin{document}

\maketitle

\begin{abstract}
We represent the modular enlargement of an executable model by a pushout of finite
typed presentations. Adhesivity preserves the original model and recovers
the shared interface, while chosen pushouts make linking functorial and
coproducts describe parallel attachments. A semantic comparison gives a
precise criterion for behavioral persistence. Effective linking also bounds
conditional prefix complexity; under the stated canonicalization and
invariance assumptions, attachment and extension codes yield the same profiles
of loss against complexity, up to additive constants, and share an exact
scalar envelope. The result separates syntax, semantics, ideal complexity,
and concrete coding cost, and shows that a preference for modularity can arise
only through restrictions on the permitted attachments.
\end{abstract}

\noindent\textbf{Keywords.}
Adhesive category; cofibration; executable model; Kolmogorov complexity;
minimum description length; pushout.

\smallskip
\noindent\textbf{Mathematics Subject Classification (2020).}\\
Primary 18B50; Secondary 18A30, 18A40, 68Q30, 68Q55.

\section{Introduction}

One often enlarges a finite executable presentation by joining a new module to
it along a typed interface.  We write this operation as the span
\[
 X\xleftarrow{u}A\xrightarrow{i}B
\]
and take its pushout \(P=X\sqcup_A B\).  The object \(X\) is the presentation
already in hand, \(A\) is the interface, and \(B\) is the incoming module.  If
the boundary map \(i\) is monic, then the induced map \(X\to P\) retains the
base as a subpresentation.

We place raw typed presentations in the presheaf slice
\(\Syn=\mathbf{Set}^{\mathcal C^{\mathrm{op}}}/T\), and denote its pointwise
finite objects by \(\Syn_{\mathrm f}\).  We call monomorphisms cofibrations.
The name merely marks the maps along which modules may be attached; it carries
no claim about weak equivalences, lifting axioms, or a model structure.
Presheaf slices are adhesive, so a pushout along such a map is also a pullback
and carries monomorphisms to monomorphisms~\cite{lack,toposes}.  The old syntax
therefore survives, and the overlap is exact.  None of this is new category
theory.  Its purpose here is to say, without metaphor, what survives when a
program grows by modules.

Yet the survival of syntax says nothing by itself about execution.  A checker
for a particular language may reject an ambient pushout, which is why we impose
admissibility separately.  Nor does successful compilation guarantee that old
behavior remains.  Given a semantics functor
\(\Run:\Exec\to\Beh\), applying \(\Run\) to an admissible attachment produces
a canonical comparison
\[
 \Run(X)\sqcup_{\Run(A)}\Run(B)\longrightarrow\Run(P).
\]
We call the attachment semantically exact when this map is invertible.  If
monomorphisms in \(\Beh\) are stable under pushout and \(\Run(i)\) is monic,
semantic exactness makes \(\Run(X)\to\Run(P)\) monic as well.  The comparison
map thus locates the proof that each application must supply, rather than
smuggling a semantic conclusion into a fact about syntax.

Fixing the base \(X\), we arrange attachment spans and pointed cofibration
extensions into two categories.  Chosen pushouts define a linking functor.  It
is left adjoint to the fully faithful functor that views an extension as an
attachment with no new interface.  Extensions are therefore reflective among
attachments.  The two effective constructions behind this reflection also
give the translations between codes used later.  After canonicalization, and
provided loss is unchanged by the relevant isomorphisms, the translations
shift the complexity profiles by at most constants.  If the grammar forbids
degenerate cells, however, the reflection may leave the grammar.  Only then
can the syntax of attachment express a real modular preference.

The point is the chain joining these observations.  An admissible attachment
along a monic boundary preserves the old syntax.  Under the stated hypotheses,
semantic exactness carries that persistence into behavior.  Effective linking
then turns a description of the attachment into a conditional description of
the extension.  The categorical ingredients and the scalar envelope are
familiar in isolation.  By placing the reflection and its coding consequence
side by side, we can use them without pretending that admissibility, semantic
preservation, ideal Kolmogorov complexity, and the length of a practical code
are interchangeable notions.

Gluing by pushouts connects the construction to adhesive graph transformation
~\cite{lack,toposes} and to categorical accounts of open systems based on
structured cospans~\cite{baezcourser}.  Structured cospans compose open systems
along their boundaries.  Here the base is fixed, an interface span is itself
an object, and the reflector forgets all but the linked pointed extension.
Our selection problem is a conditional restriction of algorithmic
loss--complexity and individual rate--distortion profiles
~\cite{vereshchagin,vereshchagin2010}, with minimum description length
supplying the interpretation in terms of actual codes~\cite{rissanen}.
Formalisms for executable world models motivate the discipline of
Section~2~\cite{schmidhuber1991,schmidhuber2004,steunebrink2013}, though the
categorical construction itself is not drawn from those works.

\section{Executable model presentations}

\subsection{The operational criterion}

\begin{definition}[Executable presentation]
Fix effective alphabets of actions \(\mathcal A\), observations \(\mathcal O\),
rewards \(\mathcal W\), patterns, values, and guards, together with a decidable
checker for types and interfaces and a single effective interpreter \(I\).  An
\emph{executable presentation} is a finite hierarchy \(M\) of guarded rules
that the checker accepts; its spaces of configurations and goals must also have
effective encodings.  The interpreter induces a computably enumerable
backward relation
\[
 R_M^-\subseteq
 \operatorname{Goal}_M\times\operatorname{Goal}_M^*
\]
and an effective relation for forward inference.  When actions are supplied as
conditions, the latter takes the form of the partial computable map
\[
 F_M:\operatorname{Conf}_M\times\mathcal A
       \rightharpoonup
       \operatorname{Conf}_M\times\mathcal O\times\mathcal W
\]
which may be iterated on each predicted configuration to produce a rollout.
The backward relation reads a consequent as a goal and its antecedents as
subgoals.  A rule body may call other named rules through typed references.
\end{definition}

Passing the checker certifies only syntactic executability.  It promises neither
termination nor usefulness.  An ill-posed input or a computation that never
halts has no output, and a semantic loss introduced later may assign it
infinite cost.

\subsection{A typed presheaf category}

Let \(\mathcal C\) be a finite, computably presented category.  Its objects
stand for syntactic sorts---rule nodes, antecedent and consequent ports,
guards, state slots, or references within a hierarchy---and its arrows record
incidence.  Fix a presheaf \(T\) of types and attributes.  We require effective
codes for every attribute that occurs in a finite object, so that equality,
typing, incidence, and naturality can all be decided from finite codes.  Raw
typed presentations then form the slice
\[
 \Syn=\mathbf{Set}^{\mathcal C^{\mathrm{op}}}/T.
\]
Write \(\Syn_{\mathrm f}\) for the full subcategory whose underlying presheaves
are pointwise finite.  Literal names and attributes that must be preserved are
encoded in the map to \(T\), rather than left as informal decorations.

A presentation discipline selects a subcategory
\(\Exec\subseteq\Syn_{\mathrm f}\) of well-formed finite open and closed
executable presentations; this subcategory need not be full.  Write
\(\iota:\Exec\hookrightarrow\Syn_{\mathrm f}\)
for its inclusion.
Its morphisms preserve types, the structure of guarded rules, ports, and
references within the hierarchy.  The discipline also specifies a semantics
functor
\[
 \Run:\Exec\longrightarrow\Beh,
\]
where \(\Beh\) may consist of transition systems, trace presentations, or any
other explicitly chosen kind of behavior.

The distinction matters.  Raw syntax has every finite colimit we shall need,
whereas a particular grammar, checker, compiler, or scheduler may not be
closed under arbitrary pushouts.  We therefore form all pushouts in
\(\Syn_{\mathrm f}\).  Such a pushout is \emph{admissible} when the four
objects and four arrows of its square all belong to \(\Exec\).  Since
\(\Exec\) need not be full, admissibility does not say that the same square is
a pushout there.

\section{Pushout attachments along cofibrations}

\subsection{Interface spans}

An attachment has the shape of a span
\[
 X\xleftarrow{u}A\xrightarrow{i}B.
\]
Its colimit identifies the two appearances of the interface \(A\); this
colimit is the pushout \(X\sqcup_A B\).

Throughout the paper, a \emph{cofibration} means a morphism of
\(\Syn_{\mathrm f}\) that is monic in \(\Syn\).  Isomorphisms are
cofibrations; composites and pushouts of cofibrations are again cofibrations.
The term has no further model-categorical force here.

\begin{lemma}[Finite typed pushouts]\label{lem:typed-pushouts}
The category \(\Syn\) is adhesive.  The subcategory \(\Syn_{\mathrm f}\) is
closed under finite colimits.  The class of its morphisms that are
monomorphisms in \(\Syn\) contains the isomorphisms, is closed under
composition, and is stable under pushout.
\end{lemma}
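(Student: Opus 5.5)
The argument has three parts, taken in the order of the statement: adhesivity of \(\Syn\), closure of \(\Syn_{\mathrm f}\) under finite colimits, and the closure properties of the class of cofibrations.

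For adhesivity, I would invoke the standard results of Lack and Soboci\'nski~\cite{lack,toposes}. Every elementary topos is adhesive, and a slice of a topos is again a topos. Hence \(\mathbf{Set}^{\mathcal C^{\mathrm{op}}}\) is a topos, and so is \(\Syn=\mathbf{Set}^{\mathcal C^{\mathrm{op}}}/T\). A more hands-on route is also available. Adhesivity is inherited by presheaf categories, because pullbacks, pushouts and monomorphisms are computed pointwise and \(\mathbf{Set}\) is adhesive. It is also inherited by slices, because colimits in \(\mathcal E/T\) are created by the forgetful functor to \(\mathcal E\), pullbacks over \(T\) are pullbacks in \(\mathcal E\), and monos are detected downstairs. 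Either route suffices, and I would cite the topos result.

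For finite colimits, I would use the fact that the forgetful functor \(\Syn\to\mathbf{Set}^{\mathcal C^{\mathrm{op}}}\) creates colimits, and that colimits of presheaves are computed objectwise in \(\mathbf{Set}\). Take a finite diagram of pointwise finite presheaves. At each object \(c\) of \(\mathcal C\), the colimit is a quotient of a finite coproduct of finite sets, so it is finite. The initial object, the empty presheaf over \(T\), is trivially pointwise finite. Since \(\Syn_{\mathrm f}\) is a full subcategory, the colimit computed in \(\Syn\) is also the colimit in \(\Syn_{\mathrm f}\).

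For the class of cofibrations, the first two properties are immediate. Isomorphisms are monic, and composites of monics are monic. Stability under pushout is the one step with content. Take a span in \(\Syn_{\mathrm f}\) with \(i\) monic in \(\Syn\). By the previous paragraph, its pushout in \(\Syn\) lies in \(\Syn_{\mathrm f}\), so it is also the pushout in \(\Syn_{\mathrm f}\). In an adhesive category, pushouts along monomorphisms exist, are also pullbacks, and their opposite leg is again monic~\cite{lack}. So the induced map \(X\to P\) is monic in \(\Syn\). Concretely, this is the pointwise fact that in \(\mathbf{Set}\) the map \(X(c)\to X(c)\sqcup_{A(c)}B(c)\) is injective whenever \(A(c)\to B(c)\) is.

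The main subtlety is keeping track of the ambient category. Monicity must be taken in \(\Syn\), and the pushout must be shown to agree in \(\Syn\) and in \(\Syn_{\mathrm f}\). Both points are settled by the creation of finite colimits and by the fact that monos in \(\Syn\) are exactly the pointwise injective maps, since the forgetful functor from the slice reflects monos and limits are pointwise.
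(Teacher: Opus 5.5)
Your proposal is correct and follows essentially the paper's proof: \(\Syn\) is adhesive because presheaf categories and their slices are toposes, finite colimits stay pointwise finite because colimits in the slice are computed objectwise in \(\mathbf{Set}\) as quotients of finite coproducts, and stability of monos under pushout is the adhesive part. Your explicit check that the ambient pushout is also the pushout in the full subcategory \(\Syn_{\mathrm f}\) is a point the paper leaves implicit, and, as your argument shows, the finite-colimit claim does not actually require \(\mathcal C\) to have finitely many objects, even though the paper invokes this.
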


\begin{proof}
Every presheaf category is a topos, every slice of a topos is a topos, and
every topos is adhesive~\cite{lack,toposes}.  Hence a pushout along a
monomorphism is also a pullback, and the corresponding cobase change remains
monic.  Colimits in a presheaf category, and in its slices, are assembled
component by component from coproducts and quotients.  Since \(\mathcal C\)
has only finitely many objects, a finite colimit of pointwise finite
presheaves is still pointwise finite.  Closure under isomorphism and
composition is general; stability under pushout is the adhesive part of the
argument.
\end{proof}

\begin{definition}[Cofibration attachment]
Let \(X,A,B\in\Syn_{\mathrm f}\).  A \emph{cofibration attachment} consists of a
binding \(u:A\to X\) and a boundary inclusion \(i:A\hookrightarrow B\) that is
a cofibration.  Its linked presentation is
\[
 P=X\sqcup_A B.
\]
The attachment is \emph{executable} when this ambient pushout is admissible.
\end{definition}

In this picture, \(A\) records the typed dependencies of the new module,
\(B\) contains the module itself, and \(u\) binds the exposed ports to \(X\).
It is \(i\), not \(u\), that must be monic: this is precisely what makes the
cobase change preserve the base.

\begin{proposition}[Persistence and exact overlap]\label{prop:persistence}
For a cofibration attachment, the canonical map
\(j_X:X\to P\) is a cofibration, and the canonical comparison
\[
 A\longrightarrow X\times_P B
\]
is an isomorphism.  If the binding \(u\) is also a cofibration, then the
canonical map \(j_B:B\to P\) is a cofibration as well.
\end{proposition}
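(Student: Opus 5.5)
The plan is to read all three claims off Lemma~\ref{lem:typed-pushouts}, applied in the ambient adhesive category \(\Syn\). First I will check that nothing is lost by working there. By the lemma, \(\Syn_{\mathrm f}\) is closed under finite colimits, and since colimits in \(\Syn\) are computed componentwise, the pushout \(P=X\sqcup_A B\) formed in \(\Syn\) already lies in \(\Syn_{\mathrm f}\) and is the pushout there as well. Its coprojections are \(j_X:X\to P\) and \(j_B:B\to P\), with \(j_X u=j_B i\). A cofibration is by definition a map of \(\Syn_{\mathrm f}\) that is monic in \(\Syn\), so each claim is a statement about the square in \(\Syn\).

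For the first claim, \(j_X\) is the cobase change of \(i\) along \(u\). Since \(i\) is a cofibration, stability under pushout in Lemma~\ref{lem:typed-pushouts} gives that \(j_X\) is monic in \(\Syn\), hence a cofibration.

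For the second claim, I will use the other half of adhesivity: in an adhesive category, a pushout along a monomorphism is also a pullback. The square with vertices \(A,X,B,P\) is a pushout along the mono \(i\), so it is a pullback. Then \((u,i):A\to X\times_P B\) is the canonical comparison into the pullback of the cospan \(X\xrightarrow{j_X}P\xleftarrow{j_B}B\), and it is an isomorphism.

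Here I must watch one point: the pullback-of-pushout property is stated for a mono along one side of the span, and the comparison is symmetric, so no orientation issue arises. If one wants to avoid citing the theorem, there is a direct presheaf argument. Work objectwise in \(\mathbf{Set}\); the forgetful functor from the slice creates pullbacks and pushouts. The pushout of sets along an injection \(A_c\hookrightarrow B_c\) is \(X_c\sqcup(B_c\setminus A_c)\). Two elements \(x\in X_c\) and \(b\in B_c\) are identified in \(P_c\) exactly when \(b\in A_c\) and \(u(b)=x\), so the fibre product is \(A_c\).

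For the third claim, when \(u\) is also a cofibration, apply the same stability to the symmetric reading of the square. The map \(j_B\) is the cobase change of \(u\) along \(i\), so it is monic.

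I expect no real obstacle. The only care needed is the bookkeeping that monicity in \(\Syn\) and the colimits of \(\Syn_{\mathrm f}\) agree with those of \(\Syn\), and Lemma~\ref{lem:typed-pushouts} supplies exactly this.
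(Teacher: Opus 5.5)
Your proposal is correct and matches the paper's proof: \(j_X\) is monic as the cobase change of the monic \(i\), adhesivity makes the pushout along a mono a pullback so \(A\to X\times_P B\) is invertible, and the same stability argument applied with \(u\) and \(i\) interchanged handles \(j_B\). Your pointwise computation in \(\mathbf{Set}\) and your bookkeeping about colimits in \(\Syn_{\mathrm f}\) are valid but add nothing essential beyond the paper's argument.
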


\begin{proof}
The map \(j_X\) is obtained by changing the base of \(i\) along \(u\), and is
therefore a cofibration.  In the ambient adhesive category, a pushout along a
monomorphism is a pullback.  Thus \(A\to X\times_P B\) is invertible.  If
\(u\) is itself a cofibration, the same argument applied on the other side
makes \(j_B\) a cofibration.
\end{proof}

The base is therefore still embedded, component by component, and the original
interface can be recovered as \(X\times_P B\).  When \(u\) is monic as well,
both pieces embed and meet in exactly \(A\).  Notice what the proposition does
not say.  It does not place \(P\) in \(\Exec\); that is the separate work of
admissibility.  Even after admissibility is known, the universal property still
lives in the ambient category unless the inclusion creates this pushout.  And
nothing so far guarantees that executions which once succeeded will continue
to do so.

\subsection{The category of attachments}

Fix \(X\in\Syn_{\mathrm f}\).  Let \(\mathbf{Ext}_X\) be the full
subcategory of the coslice \(X\mathbin{\downarrow}\Syn_{\mathrm f}\) whose
objects are cofibrations \(e:X\hookrightarrow Y\).  Let
\(\mathbf{Att}_X\) have objects
\[
 c=(X\xleftarrow{u}A\xrightarrow{i}B)
\]
with \(i\) a cofibration.  For
\(c'=(X\xleftarrow{u'}A'\xrightarrow{i'}B')\), a morphism
\((a,b):c\to c'\) consists of maps \(a:A\to A'\) and \(b:B\to B'\)
satisfying \(u=u'a\) and \(bi=i'a\).

\begin{theorem}[Attachment reflection]\label{thm:attachment-reflection}
After choosing pushouts, linking defines a functor
\[
 L_X:\mathbf{Att}_X\longrightarrow\mathbf{Ext}_X,
 \qquad
 (X\xleftarrow{u}A\xrightarrow{i}B)
 \longmapsto (X\hookrightarrow X\sqcup_A B).
\]
The functor
\[
 D_X:\mathbf{Ext}_X\longrightarrow\mathbf{Att}_X,
 \qquad
 (j:X\hookrightarrow Y)\longmapsto
 (X\xleftarrow{\mathrm{id}_X}X\xrightarrow{j}Y)
\]
is fully faithful and \(L_X\dashv D_X\).  In particular, the counit
\(L_XD_X\Rightarrow\mathrm{Id}_{\mathbf{Ext}_X}\) is an isomorphism, so
\(\mathbf{Ext}_X\) is reflective in \(\mathbf{Att}_X\).
\end{theorem}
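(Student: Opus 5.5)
We first check that both functors are well defined, then exhibit the adjunction through a natural bijection of hom-sets, and finally read off full faithfulness and the counit from it.

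\emph{The functor \(L_X\).} For each object \(c=(X\xleftarrow{u}A\xrightarrow{i}B)\) of \(\mathbf{Att}_X\), choose a pushout \(P_c=X\sqcup_A B\) with coprojections \(j_X^c,j_B^c\). Lemma~\ref{lem:typed-pushouts} puts \(P_c\) in \(\Syn_{\mathrm f}\), and Proposition~\ref{prop:persistence} makes \(j_X^c\) a cofibration, so \(L_X(c)=j_X^c\) is an object of \(\mathbf{Ext}_X\). Let \((a,b):c\to c'\) be a morphism. Then
\[
 j_B^{c'}b\,i=j_B^{c'}i'a=j_X^{c'}u'a=j_X^{c'}u .
\]
Hence the pair \((j_X^{c'},\,j_B^{c'}b)\) is a cocone on the span \(c\). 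The universal property of \(P_c\) gives a unique map \(P_c\to P_{c'}\) that restricts to \(j_X^{c'}\) on \(X\) and to \(j_B^{c'}b\) on \(B\). This map lies under \(X\), and we set \(L_X(a,b)\) equal to it. Uniqueness of mediating maps gives preservation of identities and composites.

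\emph{The functor \(D_X\).} The span \(X\xleftarrow{\mathrm{id}}X\xrightarrow{j}Y\) has a cofibration as its right leg, so it is an object of \(\mathbf{Att}_X\). A map \(f:Y\to Y'\) under \(X\) satisfies \(fj=j'\), so \((\mathrm{id}_X,f)\) is a morphism \(D_X(j)\to D_X(j')\).

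\emph{The adjunction.} For \(c\) as above and \(j:X\hookrightarrow Y\), we establish a bijection
\[
 \mathbf{Ext}_X(L_Xc,j)\cong\mathbf{Att}_X(c,D_Xj).
\]

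1. Take a morphism \((a,b):c\to D_Xj\). The condition \(u=\mathrm{id}\circ a\) forces \(a=u\), and then the remaining condition reads \(b\,i=j\,u\).
2. So such morphisms are exactly the maps \(b:B\to Y\) with \(bi=ju\), that is, cocones \((j,b)\) on the span \(c\) whose \(X\)-leg is the fixed map \(j\).
3. By the universal property of \(P_c\), these cocones correspond bijectively to maps \(h:P_c\to Y\) with \(hj_X^c=j\), and these are precisely the morphisms \(L_Xc\to j\) in the coslice.

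Naturality in both variables follows from uniqueness of mediating maps: precomposing with \(L_X(a,b)\) or postcomposing with a map under \(X\) transforms the cocones compatibly.

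This bookkeeping is the step where care is needed, though it is not deep. The essential observation is that the \(A\)-component of any morphism into \(D_Xj\) is forced to be \(u\). That forcing is why the hom-set collapses to cocones, and it is also why no hypothesis on \(Y\) beyond the coslice structure is used.

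\emph{Full faithfulness and the counit.} \(D_X\) is faithful because \(f\) is recovered as the second component of \((\mathrm{id}_X,f)\). It is full because, by step 1, every morphism \(D_Xj\to D_Xj'\) has first component \(\mathrm{id}_X\), and its second component \(f\) then satisfies \(fj=j'\). For a right adjoint, full faithfulness is equivalent to the counit being an isomorphism.

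One can also see this directly. The span \(X\xleftarrow{\mathrm{id}}X\xrightarrow{j}Y\) has pushout \(Y\), with coprojections \(j\) and \(\mathrm{id}_Y\). So the chosen pushout \(L_XD_Xj\) is canonically isomorphic to \(j\) under \(X\), and this isomorphism is the counit. Reflectivity of \(\mathbf{Ext}_X\) in \(\mathbf{Att}_X\) follows.
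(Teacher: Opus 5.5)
Your proposal is correct and follows essentially the same route as the paper: functoriality of \(L_X\) comes from uniqueness of mediating maps, the adjunction from the bijection \(\mathbf{Ext}_X(L_Xc,e)\cong\{b:B\to Y\mid bi=ju\}\cong\mathbf{Att}_X(c,D_Xe)\), full faithfulness from the interface component between degenerate attachments being forced to be \(\mathrm{id}_X\), and the counit from the canonical identification of the pushout of \(X\xleftarrow{\mathrm{id}_X}X\xrightarrow{j}Y\) with \(Y\). You spell out a few checks the paper leaves implicit: the cocone computation, the forcing \(a=u\), and the standard fact that a right adjoint is fully faithful iff its counit is invertible.
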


\begin{proof}
A morphism \((a,b):c\to c'\) induces a unique map
\(L_Xc\to L_Xc'\) under \(X\), by the universal property of the pushout.
Uniqueness gives functoriality, and Proposition~\ref{prop:persistence} says
that the source map of every linked extension is a cofibration.  For
\(e=(j:X\hookrightarrow Y)\), pushout universality gives natural bijections
\[
 \mathbf{Ext}_X(L_Xc,e)
 \cong\{b:B\to Y\mid bi=ju\}
 \cong\mathbf{Att}_X(c,D_Xe).
\]
Thus \(L_X\dashv D_X\).  For a morphism between degenerate attachments, the
map on interfaces can only be \(\mathrm{id}_X\).  It follows that \(D_X\) is
fully faithful; more explicitly, it sends a morphism \(h\) in the coslice to
\((\mathrm{id}_X,h)\).  Finally, the pushout of
\(X\xleftarrow{\mathrm{id}_X}X\xrightarrow{j}Y\) is canonically \(Y\).
This is the counit isomorphism.
\end{proof}

The reflection belongs, for now, to the ambient syntax category.  It descends
to attachments inside \(\Exec\) only if the chosen pushouts, together with
all their induced mediating maps, remain within the presentation discipline.

\begin{proposition}[Parallel composition]\label{prop:parallel}
The category \(\mathbf{Ext}_X\) has finite coproducts, created by the coslice.
For \(e_s:X\hookrightarrow Y_s\), their coproduct has apex the iterated
pushout \(Y_1\sqcup_X\cdots\sqcup_XY_m\).  If each \(e_s=L_Xc_s\), this apex
is canonically the colimit of the corresponding finite star of attachment
spans.  Thus parallel attachments are independent of ordering up to the
canonical coproduct isomorphisms.
\end{proposition}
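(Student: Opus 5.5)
The plan is to reduce everything to the coslice \(X\mathbin{\downarrow}\Syn_{\mathrm f}\) and to Lemma~\ref{lem:typed-pushouts}.

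First, recall that in any coslice \(X\mathbin{\downarrow}\mathcal D\) the coproduct of \(e_s:X\to Y_s\) is computed as the colimit in \(\mathcal D\) of the star diagram with centre \(X\) and legs \(e_s\). This is the wide pushout \(Y_1\sqcup_X\cdots\sqcup_XY_m\), and the empty coproduct is \(\mathrm{id}_X\). For \(\mathcal D=\Syn_{\mathrm f}\) this colimit exists because Lemma~\ref{lem:typed-pushouts} gives closure under finite colimits; I realise it concretely as an iterated binary pushout. The induced leg \(X\to Y_1\sqcup_X\cdots\sqcup_XY_m\) is a composite of cobase changes of the cofibrations \(e_s\), hence a cofibration by the same lemma. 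So the coslice coproduct lies in the full subcategory \(\mathbf{Ext}_X\). Because \(\mathbf{Ext}_X\) is full, the coslice universal property restricts verbatim, and this is the sense in which coproducts are created.

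Second, for \(e_s=L_Xc_s\) with \(c_s=(X\xleftarrow{u_s}A_s\xrightarrow{i_s}B_s)\), I take the star diagram with centre \(X\), the objects \(A_s\) and \(B_s\), and the arrows \(u_s\) and \(i_s\). I then compare its colimit with the wide pushout of the \(Y_s=X\sqcup_{A_s}B_s\). The argument is the usual commutation of colimits, or "pasting of pushouts": a cocone under the star is a map out of \(X\) together with maps \(B_s\to Z\) agreeing on each \(A_s\). By the universal property of each \(Y_s\), this is the same as a family of maps \(Y_s\to Z\) agreeing on \(X\), which is a cocone over the wide pushout diagram. The Yoneda lemma then gives the canonical isomorphism of apexes. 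It is compatible with the legs from \(X\), so it is an isomorphism in \(\mathbf{Ext}_X\).

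Third, order independence follows formally. Any two iterated bracketings or orderings of the pushout both satisfy the coproduct universal property in \(\mathbf{Ext}_X\). They are therefore related by the unique canonical isomorphisms under \(X\) that commute with the coprojections, and these isomorphisms satisfy the coherence inherited from the coproduct.

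The main obstacle is a bookkeeping point, not a conceptual one. I must check that the iterated binary pushout really computes the wide pushout, meaning that the legs are correctly identified at each stage. I must also check that the cofibration property survives each stage, because each new stage pushes out along the composite leg from \(X\), which is monic by the previous step. An induction on \(m\) using pushout pasting settles both points.
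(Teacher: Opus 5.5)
Your proposal is correct and takes essentially the paper's route. Coproducts in the coslice are pushouts over \(X\), and the leg from \(X\) is a composite of cobase changes of cofibrations, so the coproduct lies in the full subcategory \(\mathbf{Ext}_X\); the star-colimit identification and order independence then follow from comparing universal properties, with your induction on \(m\) spelling out the paper's ``iterating gives all finite coproducts.''
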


\begin{proof}
The identity on \(X\) is initial in \(\mathbf{Ext}_X\), while binary
coproducts in the coslice are pushouts over \(X\).  Their common base map
factors as
\(X\hookrightarrow Y_1\hookrightarrow Y_1\sqcup_XY_2\); the second map is a
base change of \(X\hookrightarrow Y_2\).  Since both factors are
cofibrations, the coproduct again lies in \(\mathbf{Ext}_X\); fullness means
that the coslice coproduct is created there.  Iterating gives all finite
coproducts.  The last assertion follows because the iterated pushout and the
star colimit have the same universal property.
\end{proof}

An incoming cell may, of course, depend on cells already attached.  Given
\[
 X_{k-1}\xleftarrow{u_k}A_k\xrightarrow{i_k}B_k,
 \qquad
 X_k=X_{k-1}\sqcup_{A_k}B_k,
\]
with each \(i_k\) a cofibration, Proposition~\ref{prop:persistence}, applied
at every stage, gives the filtration
\[
 X_0\hookrightarrow X_1\hookrightarrow\cdots\hookrightarrow X_n.
\]
Every arrow in this filtration is a cofibration, and every finite stage remains
in \(\Syn_{\mathrm f}\).

\section{Execution semantics}

\subsection{The comparison map}

Take an executable attachment, and suppose that the following pushout of
behaviors exists:
\[
 Q=\Run(X)\sqcup_{\Run(A)}\Run(B).
\]
The syntactic pushout cocone, after applying \(\Run\), gives compatible maps
from \(\Run(X)\) and \(\Run(B)\) into \(\Run(P)\).  The universal property of
\(Q\) therefore supplies a canonical comparison
\[
 \chi:Q\longrightarrow\Run(P).
\]

\begin{definition}[Semantically exact attachment]
An executable attachment is \emph{semantically exact} when \(\chi\) is an
isomorphism.
\end{definition}

Exactness is one sufficient compatibility condition for a language and its
linker.  The syntactic pushout does not imply it, nor need it be necessary for
every other reasonable notion of preserving behavior.

\begin{theorem}[Semantic monicity criterion]\label{thm:behavioral-persistence}
Assume that the displayed behavioral pushout exists, monomorphisms in
\(\Beh\) are stable under pushout, \(\Run(i):\Run(A)\to\Run(B)\) is monic,
and the attachment is semantically exact.  Then
\[
 \Run(j_X):\Run(X)\longrightarrow\Run(P)
\]
is monic.
\end{theorem}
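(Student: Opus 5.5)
The plan is to factor \(\Run(j_X)\) through the behavioral pushout \(Q\) and check that both factors are monic.

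First, apply \(\Run\) to the commuting square \(j_Xu=j_Bi\) of the syntactic pushout. This gives a commuting square in \(\Beh\):
\[
 \Run(j_X)\Run(u)=\Run(j_B)\Run(i).
\]
Let
\[
 q_X:\Run(X)\to Q,\qquad q_B:\Run(B)\to Q
\]
be the coprojections of the behavioral pushout. By construction, \(\chi\) is the unique map with \(\chi q_X=\Run(j_X)\) and \(\chi q_B=\Run(j_B)\). Hence
\[
 \Run(j_X)=\chi\circ q_X .
\]

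Second, \(q_X\) is the cobase change of \(\Run(i)\) along \(\Run(u)\) in the pushout square defining \(Q\). Since \(\Run(i)\) is monic and monomorphisms in \(\Beh\) are stable under pushout, \(q_X\) is monic.

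Third, semantic exactness says \(\chi\) is an isomorphism, so in particular it is monic. A composite of monomorphisms is monic, so \(\chi q_X=\Run(j_X)\) is monic. This completes the argument.

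No step here is a real obstacle. The only point needing care is the identification of \(q_X\) as the base change of \(\Run(i)\), not of \(\Run(u)\). "Stable under pushout" must be read as saying that the leg opposite a monic leg is monic. That is exactly how the hypothesis is phrased in the introduction, and it matches the syntactic situation of Proposition~\ref{prop:persistence}, where \(j_X\) is the cobase change of \(i\).

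It is also worth noting that no adhesivity of \(\Beh\) and no syntactic information beyond the commuting square are used. In particular, the syntactic monicity of \(j_X\) plays no role, which is consistent with the paper's point that syntax alone does not yield behavioral persistence.
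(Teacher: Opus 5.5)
Your proposal is correct and follows the paper's own proof. You factor \(\Run(j_X)=\chi\circ q_X\), note that \(q_X\) is monic as the pushout of the monic \(\Run(i)\), and compose with the isomorphism \(\chi\).
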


\begin{proof}
The canonical map \(\Run(X)\to Q\) is obtained by pushing out \(\Run(i)\), so
it is monic.  Composing it with the isomorphism
\(\chi:Q\to\Run(P)\) gives \(\Run(j_X)\), which is therefore monic.
\end{proof}

The conclusion is internal to the chosen category \(\Beh\).  It says that old
traces or transitions are literally preserved only if morphisms in that
category also reflect traces or transitions.  Along any finite chain of
attachments satisfying the theorem, \(\Run(X_0)\to\Run(X_n)\) is monic because
monomorphisms are closed under composition.

The assumptions separate two ways the construction can fail.  First, a guard
or a clash of names may cause the compiler to reject the raw pushout.  Second,
even a program that compiles may acquire interactions absent from the pushout
of its component behaviors, in which case \(\chi\) is not invertible.  Suppose,
for example, that a global scheduler always chooses the first matching rule.
Attaching a new rule of higher priority can then suppress an output formerly
chosen from \(X\), even though the pushout of rule graphs is admissible.  On
the proposed syntax morphisms, such a scheduler may not define a functor at
all.  Passing to a coarser category of behaviors can restore functoriality, but
the comparison map may still fail to be invertible.  This elementary example
is enough to show why syntax alone cannot preserve behavior.  Enumerating
traces proves exactness only when the declared semantics is finite, effectively
presented, and exhaustively checkable.  A finite sample never proves global
equivalence of programs.

\section{Conditional complexity of linking}

Fix a universal prefix machine \(U\), together with effective self-delimiting
codes for finite presentations, morphisms, and tuples.  We write
\(\K(z\mid w)\) for conditional prefix Kolmogorov complexity.  A candidate is
encoded as a pointed extension \((X,j,Y)\), not merely as a labeled object in
isolation, and the base \(X\) is given a fixed canonical code.  To
canonicalize an extension in the coslice
\(X\mathbin{\downarrow}\Syn_{\mathrm f}\), first represent \(j\) as a
literal inclusion and keep the labels on its image.  At each sort, rename the
remaining elements by a standard finite set, and choose the least among the
finitely many incidence codes that result.  Renaming must preserve both the
map to \(T\) and the marked arrow.  Our decidability assumptions on finite
attributes make this procedure effective.  The resulting complexity still
depends, as it must, on the chosen representation and universal machine, but
not on pointless changes of names outside the base.

For an attachment to \(X\), let
\[
 c=(A,B,u,i)
\]
be its complete attachment code.  Componentwise finite coproducts, quotienting
by the interface identifications, and canonical relabeling give a fixed
computable linker \(\operatorname{Link}\), total on valid finite span codes,
with literal domain \(X\) and
\[
 \operatorname{Link}(X,c)=(j_X:X\hookrightarrow P),
 \qquad P=X\sqcup_A B,
\]
after canonicalization.  The linker constructs raw syntax; it is not asked to
decide whether the result belongs to \(\Exec\).  The following bounds concern
those outputs that are admissible.

\begin{proposition}[Complexity bound for effective linking]\label{prop:gluing-bound}
There is a constant \(b\), depending only on the encodings, universal machine,
and linker, such that every admissible attachment satisfies
\[
 \K((j_X:X\hookrightarrow X\sqcup_A B)\mid X)
 \le \K(A,B,u,i\mid X)+b.
\]
If the cell \(B\) is already available as conditional information, then
\[
 \K((j_X:X\hookrightarrow X\sqcup_A B)\mid X,B)
 \le \K(A,u,i\mid X,B)+b.
\]
\end{proposition}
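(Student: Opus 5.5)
The plan is to apply the standard principle that a fixed computable map raises conditional prefix complexity by at most an additive constant. We wrap the linker into a prefix machine that uses the same conditional information as the right-hand side. Fix once and for all a prefix machine \(V\) that, given conditional input \(w\) (the canonical code of \(X\)) and a program \(p\), does three things. First, it simulates \(U\) on \(p\) with condition \(w\) to obtain a string \(z\). Second, it parses \(z\) as a self-delimiting tuple code \((A,B,u,i)\). Third, it outputs \(\operatorname{Link}(X,z)\). The conditional input already contains \(X\), and \(\operatorname{Link}\) is total on valid span codes, so \(V\) halts and outputs the canonical code of \(j_X:X\hookrightarrow X\sqcup_A B\) whenever \(p\) is a \(U\)-program for the attachment code given \(X\). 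Prefix-freeness is inherited, since \(V\) reads exactly the bits that \(U\) reads. By universality, \(U\) simulates \(V\) with an overhead \(b_V\) that depends only on \(V\). We take \(p\) to be a shortest program witnessing \(\K(A,B,u,i\mid X)\), which yields the first inequality with \(b=b_V\).

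For the second inequality we build a machine \(V'\) the same way, now with condition \((X,B)\). It simulates \(U\) on \(p\) with that condition to recover \((A,u,i)\). It then inserts \(B\), taken from the condition, into the tuple to form the full code \(c=(A,B,u,i)\), and applies \(\operatorname{Link}(X,c)\). This gives a constant \(b_{V'}\), and we set \(b=\max(b_V,b_{V'})\). We should also note that admissibility plays no role in the argument: the bound holds for every valid span code. The restriction to admissible attachments only names the outputs of interest, because the linker is not asked to decide membership in \(\Exec\).

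The main point needing care is that the output really is the object whose complexity appears on the left. That object is the canonical code of the pointed extension, not an arbitrary representative of the isomorphism class of \(X\sqcup_A B\) under \(X\). We will appeal to the description of \(\operatorname{Link}\) given above: it forms the componentwise coproduct, quotients by the interface identifications, represents \(j_X\) as a literal inclusion (monic by Proposition~\ref{prop:persistence}), and canonically relabels the elements outside the image. Its output is therefore a computable function of \((X,c)\) alone, and it agrees with the canonical code of \((X,j_X,P)\).

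The machines \(V\) and \(V'\) must be fixed independently of the attachment, so that \(b\) depends only on the encodings, \(U\), and \(\operatorname{Link}\). This holds because the effective self-delimiting tuple codes make the parsing and reassembly uniform.
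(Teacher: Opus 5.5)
Your proposal is correct and follows the same route as the paper: a fixed program runs a shortest conditional description of the attachment code and then applies the fixed linker $\operatorname{Link}$, at constant cost by universality, and for the second inequality $B$ is read from the conditional tape. Your extra care (prefix-freeness of the wrapper machine, taking $b=\max(b_V,b_{V'})$, and noting that admissibility is never used) makes explicit what the paper's two-line proof leaves implicit.
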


\begin{proof}
A fixed program first runs a shortest conditional description of the
attachment and then calls \(\operatorname{Link}\).  Its own length is constant.
For the second inequality, the same program receives \(B\) on its conditional
tape.
\end{proof}

This is the usual monotonicity of prefix complexity under a fixed partial
computable map, here applied to the linker.  In the first inequality the cell
\(B\) is not free information; in the second it is, because both sides receive
it as a condition.  The same care is needed with libraries.  If every
candidate may call one fixed library \(R\), then every code and every
complexity must be conditioned on the same pair \((X,R)\).  Giving each
candidate its own uncharged library destroys the comparison.  Finally, these
statements are upper bounds, not additive identities: shared algorithmic
information can make a joint description shorter than the sum of its parts.

For a finite sequence of attachment codes \(c_{1:n}\), iterated linking is
also a fixed partial computable operation.  Writing \(j_{0,n}:X_0\hookrightarrow
X_n\) for the composite pointed extension, one has
\[
 \K((j_{0,n}:X_0\hookrightarrow X_n)\mid X_0)
 \le \K(c_{1:n}\mid X_0)+O(1).
\]
If \(\ell(c_{1:n}\mid X_0)\) is the length of a declared self-delimiting code
with a fixed uniform effective decoder, then
\[
 \K((j_{0,n}:X_0\hookrightarrow X_n)\mid X_0)
 \le \ell(c_{1:n}\mid X_0)+O(1).
\]
The last inequality is a rule for practical bookkeeping, not an exact
calculation of \(\K\).  Once the self-delimiting decoder for tuples has been
fixed, its \(O(1)\) term is uniform in \(n\).

\section{Choosing extensions by Kolmogorov complexity}

\subsection{Conditional profiles of loss and complexity}

Once a universal machine has been fixed, Kolmogorov complexity
~\cite{kolmogorov} becomes a coordinate measuring description.  For finite
data \(x\), the classical structure function is
\[
 h_x(\alpha)
 =\min\{\log_2|S|:x\in S,\ S\text{ finite},\ \K(S)\le\alpha\},
\]
with value \(+\infty\) when the constraint set is empty
~\cite{vereshchagin,livitanyi}.  Algorithmic rate--distortion functions
generalize this expression by replacing \(\log |S|\) with a distortion criterion
~\cite{vereshchagin2010}.

We now consider only candidates that extend a given base.  Fix a
presentation \(X\in\Exec\), finite data \(D\), and a countable, effectively
presented class \(\mathcal E_X\) of morphisms
\(e=(j:X\hookrightarrow Y)\) in \(\Exec\) whose underlying maps are
cofibrations.  The marked arrow belongs to the candidate, since one target may
extend \(X\) in several different ways.  Let
\[
 L_D(e)\in[0,+\infty]
\]
be the chosen loss under execution.  We allow infinite loss when execution
does not terminate or a semantic obligation fails, and write
\(k_X(e)=\K(e\mid X)\).

\begin{definition}[Profiles]
For \(\alpha\in\N\) and finite \(r\ge0\), the two conditional
profiles are
\[
 H_{D,X}(\alpha)
 =\min\{L_D(e):e\in\mathcal E_X,\ k_X(e)\le\alpha\}
\]
and
\[
 C_{D,X}(r)
 =\min\{k_X(e):e\in\mathcal E_X,\ L_D(e)\le r\},
\]
Each minimum is understood to be \(+\infty\) if its constraint set is empty.
\end{definition}

For fixed \(\alpha\), only finitely many candidate codes have a prefix
description of length at most \(\alpha\).  Thus the first minimum, if finite,
is attained.  The second is attained whenever finite because complexity takes
integer values.
Consequently, for \(\alpha\in\N\) and finite \(r\),
\[
 H_{D,X}(\alpha)\le r
 \quad\Longleftrightarrow\quad
 C_{D,X}(r)\le\alpha.
\]
In general these profiles are ideal, uncomputable objects.  They reduce to the
classical structure function when \(X\) is empty or fixed, the candidates are
finite sets \(S\) containing \(D\), and \(L_D(S)=\log_2|S|\), apart from the
fixed convention used to encode candidates.

\subsection{The scalar envelope and MDL specialization}

For \(\lambda>0\), define
\[
 \Phi_{D,X}(\lambda)
 =\inf_{e\in\mathcal E_X}
   \bigl(L_D(e)+\lambda k_X(e)\bigr).
\]
As usual, an infimum over the empty set is \(+\infty\).

\begin{proposition}[Exact scalar envelope]\label{prop:envelope}
For every \(\lambda>0\),
\[
 \Phi_{D,X}(\lambda)
 =\inf_{\alpha\in\N}
   \bigl(H_{D,X}(\alpha)+\lambda\alpha\bigr)
 =\inf_{r\ge0}
   \bigl(r+\lambda C_{D,X}(r)\bigr).
\]
If at least one candidate has finite loss, the infimum defining
\(\Phi_{D,X}(\lambda)\) is attained, and every minimizer is Pareto-optimal in
the two coordinates \((L_D,k_X)\).
\end{proposition}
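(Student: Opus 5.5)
We prove the two equalities by regrouping the infimum over candidates according to their complexity, respectively their loss, and then use the finiteness of low-complexity codes for attainment. Throughout we use the convention \(+\infty+\lambda\alpha=+\infty\). Nothing from the categorical sections is needed; this is pure order-theoretic bookkeeping on the pair \((L_D,k_X)\).

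\emph{First equality.} We show two inequalities.

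For \(\ge\), take any \(e\in\mathcal E_X\) with \(k_X(e)<\infty\), which always holds since \(e\) is a finite object. Put \(\alpha=k_X(e)\in\N\). Then \(H_{D,X}(\alpha)\le L_D(e)\), so
\[
H_{D,X}(\alpha)+\lambda\alpha\le L_D(e)+\lambda k_X(e).
\]

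For \(\le\), fix \(\alpha\) with \(H_{D,X}(\alpha)<\infty\). By the attainment remark after the definition of the profiles, some \(e\) satisfies \(k_X(e)\le\alpha\) and \(L_D(e)=H_{D,X}(\alpha)\). Hence
\[
L_D(e)+\lambda k_X(e)\le H_{D,X}(\alpha)+\lambda\alpha.
\]
If every \(H_{D,X}(\alpha)\) is infinite, both sides are \(+\infty\).

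\emph{Second equality.} The argument is symmetric.

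For \(\ge\), take \(e\) with \(L_D(e)<\infty\) and set \(r=L_D(e)\). Then \(C_{D,X}(r)\le k_X(e)\). Candidates with infinite loss contribute \(+\infty\) to \(\Phi_{D,X}(\lambda)\) and do not matter, except when all candidates have infinite loss; in that case every \(C_{D,X}(r)\) is \(+\infty\) and both sides are \(+\infty\).

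For \(\le\), when \(C_{D,X}(r)<\infty\) it is attained by some \(e\) with \(L_D(e)\le r\). Then
\[
L_D(e)+\lambda k_X(e)\le r+\lambda C_{D,X}(r).
\]

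\emph{Attainment.} This is the main point, since \([0,+\infty]\)-valued losses could in principle have an infimum that is approached but never reached. Suppose some \(e_0\) has \(L_D(e_0)<\infty\), and set
\[
M=L_D(e_0)+\lambda k_X(e_0)<\infty.
\]
Any candidate \(e\) with \(L_D(e)+\lambda k_X(e)\le M\) satisfies \(k_X(e)\le M/\lambda\), using \(L_D\ge0\). Only finitely many candidates have prefix complexity at most \(M/\lambda\) given \(X\). This holds because distinct candidates have distinct canonical codes, and there are finitely many programs of bounded length. The infimum is therefore a minimum over a finite nonempty set, and so it is attained.

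\emph{Pareto-optimality.} Let \(e^*\) be a minimizer. Suppose some \(e\) weakly dominates it, with \(L_D(e)\le L_D(e^*)\) and \(k_X(e)\le k_X(e^*)\), and at least one of these inequalities strict. Because \(\lambda>0\) and \(L_D(e^*)<\infty\), we get
\[
L_D(e)+\lambda k_X(e)<L_D(e^*)+\lambda k_X(e^*),
\]
which contradicts minimality. The only care needed is finiteness of \(L_D(e^*)\). That finiteness holds because the minimum value is at most \(M<\infty\), so the strict inequality in either coordinate survives the sum.
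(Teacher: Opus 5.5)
Your proof is correct and follows the paper's argument step for step. You regroup the scalar objective by \(\alpha=k_X(e)\) or \(r=L_D(e)\), use candidates attaining \(H_{D,X}(\alpha)\) or \(C_{D,X}(r)\) for the reverse inequalities, bound \(k_X(e)\le M/\lambda\) to reduce attainment to finitely many candidates, and get Pareto-optimality from \(\lambda>0\).

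Your handling of infinite losses is more explicit than the paper's: you never set \(r=+\infty\), and you check \(L_D(e^*)<\infty\) before the strict-inequality step. Your finiteness step rests on the same identification of candidates with their codes that the paper tacitly uses.
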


\begin{proof}
For any candidate \(e\), set \(\alpha=k_X(e)\).  Then
\(H_{D,X}(\alpha)\le L_D(e)\), so the first right-hand infimum is at most
\(\Phi_{D,X}(\lambda)\).  Conversely, a finite value of
\(H_{D,X}(\alpha)\) is realized by some \(e_\alpha\), and
\[
 \Phi_{D,X}(\lambda)
 \le L_D(e_\alpha)+\lambda k_X(e_\alpha)
 \le H_{D,X}(\alpha)+\lambda\alpha.
\]
Taking infima proves the first equality.  The second is the same argument with
the coordinates exchanged: take \(r=L_D(e)\) in one direction, and in the
other choose a candidate that realizes \(C_{D,X}(r)\).

To prove attainment, choose a candidate whose objective has some finite value
\(M\).  Any candidate doing at least as well must satisfy
\(k_X(e)\le M/\lambda\), and there are only finitely many such candidates.
One of them therefore minimizes the objective.  If another candidate improved
both coordinates weakly and one of them strictly, it would strictly lower the
scalar objective.  Every minimizer is consequently Pareto optimal.
\end{proof}

For a probabilistic instance, suppose
\(\{P_e(\,\cdot\mid X):e\in\mathcal E_X\}\) is a family of conditional
semimeasures that is uniformly lower semicomputable from \((e,X)\), and set
\(-\log_2 0=+\infty\).  The coding inequality gives
\[
 \K(D\mid e,X)
 \le-\log_2 P_e(D\mid X)+O(1),
\]
and a self-delimiting concatenation therefore gives
\[
 \K(D\mid X)
 \le \K(e\mid X)-\log_2 P_e(D\mid X)+O(1),
\]
where the constant is uniform in \(D,e,X\) and depends only on the decoder for
the family.  Taking \(L_D(e)=-\log_2P_e(D\mid X)\), we find that
\(\Phi_{D,X}(1)\) is the ideal conditional optimum for two-part MDL.  It is
finite precisely when at least one candidate gives \(D\) positive mass
~\cite{rissanen,livitanyi}.  The conclusion depends on the uniform coding
hypothesis just stated; it does not follow for an arbitrary real-valued loss.

\subsection{Extension codes versus attachment codes}

Let \(\mathcal C_X\) be the valid codes \(c=(A,B,u,i)\) of admissible
cofibration attachments for which \(e_c=\operatorname{Link}(X,c)\) lies in
\(\mathcal E_X\).  Assume that \(\mathcal E_X\) remains closed when we apply
the fixed canonicalization that preserves \(X\), and that pointed isomorphisms
fixing \(X\) do not change the loss.  Define the ideal profile and scalar
objective for attachment codes by
\[
 H^{\mathrm{att}}_{D,X}(\alpha)
 =\min\{L_D(e_c):c\in\mathcal C_X,\ \K(c\mid X)\le\alpha\}
\]
and
\[
 \Phi^{\mathrm{att}}_{D,X}(\lambda)
 =\inf_{c\in\mathcal C_X}
   \bigl(L_D(e_c)+\lambda\K(c\mid X)\bigr).
\]

For this comparison, the object map of \(L_X\) is the one canonical computable
linker already fixed above, used uniformly for every \(X\).  The construction
\((X,e)\mapsto D_Xe\) is uniform and computable as well.

\begin{theorem}[Comparing attachment and extension codes]\label{thm:attachment-profile}
There are nonnegative integer constants \(b_1,b_2\), depending only on the
fixed machine, encodings, linker, and canonicalizer and independent of
\(D,X,\alpha,\lambda\), such that
\[
 H_{D,X}(\alpha+b_1)
 \le H^{\mathrm{att}}_{D,X}(\alpha),
 \qquad
 \Phi_{D,X}(\lambda)
 \le\Phi^{\mathrm{att}}_{D,X}(\lambda)+\lambda b_1,
\]
and
\[
 H^{\mathrm{att}}_{D,X}(\alpha+b_2)
 \le H_{D,X}(\alpha),
 \qquad
 \Phi^{\mathrm{att}}_{D,X}(\lambda)
 \le\Phi_{D,X}(\lambda)+\lambda b_2.
\]
\end{theorem}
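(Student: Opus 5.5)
The proof is a pair of candidate-transfer arguments, one for each direction. Each direction is built from two ingredients. The first is a complexity bound for a fixed computable translation, namely Proposition~\ref{prop:gluing-bound} or its analogue for \(D_X\). The second is a check that the translated candidate lies in the other class and has the same loss. Once the profile inequalities are established, the scalar inequalities follow from Proposition~\ref{prop:envelope}, or directly from the same translation at the level of single candidates. For that last step I would write \(\Phi\) as \(\inf_\alpha(H(\alpha)+\lambda\alpha)\) and its attachment analogue in the same way, with the attachment version obtained by the identical argument, since only finiteness of candidates below a given complexity is used. Then
\[
\Phi_{D,X}(\lambda)\le\inf_\alpha\bigl(H_{D,X}(\alpha+b_1)+\lambda(\alpha+b_1)\bigr)\le\Phi^{\mathrm{att}}_{D,X}(\lambda)+\lambda b_1 .
\]
The simpler route is to transfer each candidate directly.

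\emph{From attachments to extensions.} Take \(c\in\mathcal C_X\) with \(\K(c\mid X)\le\alpha\) that realizes \(H^{\mathrm{att}}_{D,X}(\alpha)\). By definition of \(\mathcal C_X\), the linked extension \(e_c=\operatorname{Link}(X,c)\) lies in \(\mathcal E_X\). Proposition~\ref{prop:gluing-bound} gives \(\K(e_c\mid X)\le\K(c\mid X)+b\), where \(b\) depends only on the machine, the encodings and the linker. Hence \(e_c\) is admissible in the definition of \(H_{D,X}(\alpha+b)\) and has the same loss \(L_D(e_c)\), so we may take \(b_1=b\). If the attachment profile is infinite, there is nothing to prove. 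For \(\Phi\), applying this to each \(c\) gives \(L_D(e_c)+\lambda k_X(e_c)\le L_D(e_c)+\lambda\K(c\mid X)+\lambda b_1\), and we take the infimum over \(c\).

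\emph{From extensions to attachments.} Take \(e=(j:X\hookrightarrow Y)\in\mathcal E_X\) that realizes \(H_{D,X}(\alpha)\). Form the degenerate code \(c=D_Xe=(X,Y,\mathrm{id}_X,j)\). This construction is a fixed computable map applied to \(X\) (available on the conditional tape) and to \(e\), so \(\K(c\mid X)\le\K(e\mid X)+b'\). Next I would verify that \(c\in\mathcal C_X\), which has two parts.

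\begin{itemize}
\item \textbf{Admissibility.} The pushout square of \(X\xleftarrow{\mathrm{id}}X\xrightarrow{j}Y\) can be taken to have sides \(\mathrm{id}_X\), \(j\), \(j\), \(\mathrm{id}_Y\). All four objects and arrows lie in \(\Exec\) because \(e\) does.
\item \textbf{Membership and loss.} The chosen pushout \(\operatorname{Link}(X,c)\) is, by the counit isomorphism of Theorem~\ref{thm:attachment-reflection}, a canonicalized pointed copy of \(e\), isomorphic to \(e\) under \(X\). The standing assumptions then apply: \(\mathcal E_X\) is closed under the canonicalization, and pointed isomorphisms fixing \(X\) preserve loss. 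Together they give \(e_c\in\mathcal E_X\) and \(L_D(e_c)=L_D(e)\).
\end{itemize}

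With \(b_2=b'\), this gives \(H^{\mathrm{att}}_{D,X}(\alpha+b_2)\le H_{D,X}(\alpha)\), and the scalar bound follows by the same per-candidate argument.

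The main obstacle is the admissibility and canonicalization step in the second direction. There are two concerns. First, the literal linker output must actually equal a candidate that \(\mathcal E_X\) contains. Second, admissibility is a condition on the ambient square, which is defined only up to the chosen pushout. I would handle the first concern by reading admissibility as invariant under isomorphism of squares fixing \(X\), or by noting that the canonical linker output is a canonicalized \(e\). I would handle the second by using the closure hypothesis on \(\mathcal E_X\) stated just before the theorem. Uniformity of \(b_1,b_2\) in \(D\), \(X\), \(\alpha\) and \(\lambda\) holds because both translations are single fixed programs that take \(X\) only from the conditional tape.
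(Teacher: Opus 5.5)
Your proposal is correct and is essentially the paper's proof. The forward direction uses Proposition~\ref{prop:gluing-bound} together with \(e_c\in\mathcal E_X\) by definition of \(\mathcal C_X\). The converse uses the degenerate code \(D_Xe\), admissibility of the square with sides \(\mathrm{id}_X,j,j,\mathrm{id}_Y\), closure under canonicalization, the counit isomorphism \(L_XD_Xe\cong e\) and loss invariance, and the scalar bounds then follow by the same per-candidate transfer.

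One small slip: in your last paragraph the two remedies are attached to the wrong concerns. Closure of \(\mathcal E_X\) under canonicalization handles membership of the linker output, while isomorphism-invariance of admissibility handles the chosen-pushout square. The paper also passes over that second point by taking the apex to be \(Y\) itself.
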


\begin{proof}
For \(e=(j:X\hookrightarrow Y)\in\mathcal E_X\), the degenerate attachment
\(D_Xe=(X\xleftarrow{\mathrm{id}_X}X\xrightarrow{j}Y)\) is admissible: its
pushout square has maps \(\mathrm{id}_X,j,j,\mathrm{id}_Y\) in \(\Exec\).
Its canonical link is isomorphic to \(e\) as a pointed extension.  Closure
under canonicalization therefore places the link in \(\mathcal E_X\), and the
code in \(\mathcal C_X\).
The fixed effective object maps of \(L_X\) and \(D_X\) give constants
\[
 \K(L_Xc\mid X)\le\K(c\mid X)+b_1,
 \qquad
 \K(D_Xe\mid X)\le\K(e\mid X)+b_2.
\]
The first inequality is Proposition~\ref{prop:gluing-bound}; the second comes
from the fixed construction of a degenerate attachment.  By
Theorem~\ref{thm:attachment-reflection},
\(L_XD_Xe\cong e\) under \(X\),
and invariance under pointed isomorphism preserves its loss.  Apply the two
complexity bounds to the feasible sets and then to the scalar objectives.  The
four claimed inequalities follow.
\end{proof}

The converse is allowed to use degenerate cells, and this is essential.  It
says that an unrestricted attachment language has no more inductive bias than
the extension class itself, except for constants introduced by coding.  A
restricted grammar of primitive cells can express a genuine preference for
modularity.  In that case the converse inequalities may fail, though the
forward bounds supplied by the linker survive.

\subsection{Machine dependence and computable proxies}

\begin{proposition}[Invariance]
Hold \(\mathcal E_X\) and \(L_D\) fixed, and let \(U\) and \(V\) be optimal
universal prefix machines.  There is a nonnegative integer \(c\) such that,
for all \(D,X\), \(\alpha\in\N\), finite \(r\ge0\), and \(\lambda>0\),
\[
 H^{U}_{D,X}(\alpha+c)\le H^{V}_{D,X}(\alpha),
 \qquad
 H^{V}_{D,X}(\alpha+c)\le H^{U}_{D,X}(\alpha),
\]
\[
 C^{U}_{D,X}(r)\le C^{V}_{D,X}(r)+c,
 \qquad
 C^{V}_{D,X}(r)\le C^{U}_{D,X}(r)+c,
\]
and, whenever the scalar optima are finite,
\[
 |\Phi^{U}_{D,X}(\lambda)-\Phi^{V}_{D,X}(\lambda)|
 \le\lambda c.
\]
\end{proposition}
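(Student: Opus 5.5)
The plan is to start from the invariance theorem for conditional prefix complexity and push it through the profiles, in the same way Theorem~\ref{thm:attachment-profile} pushed the linker bounds. Because \(U\) and \(V\) are optimal universal prefix machines, there is a constant \(c\) with
\[
 |K_U(z\mid w)-K_V(z\mid w)|\le c
\]
for all finite codes \(z,w\). One machine simulates the other through a fixed prefix-free program, and the conditional tape is passed along unchanged. Applied with \(z=e\) and \(w=X\), this gives \(|k^U_X(e)-k^V_X(e)|\le c\) uniformly in \(e\), \(X\) and \(D\). The loss \(L_D\) and the class \(\mathcal E_X\) do not depend on the machine, and neither does the canonical code of the base, since the encodings are fixed beforehand.

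\textbf{Structure-function profile.} Take any candidate feasible for \(H^V_{D,X}(\alpha)\), so \(k^V_X(e)\le\alpha\). Then \(k^U_X(e)\le\alpha+c\), so \(e\) is feasible for \(H^U_{D,X}(\alpha+c)\). Minimizing over the smaller feasible set gives \(H^U_{D,X}(\alpha+c)\le H^V_{D,X}(\alpha)\). If the right-hand side is \(+\infty\), the inequality holds trivially. The symmetric inequality follows by exchanging the machines.

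\textbf{Complexity profile.} The set \(\{e\in\mathcal E_X: L_D(e)\le r\}\) is the same for both machines. On it, the minimum of \(k^U_X\) is at most the minimum of \(k^V_X\) plus \(c\), since each term satisfies the pointwise bound. Empty sets give \(+\infty\) on both sides.

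\textbf{Scalar envelope.} Pointwise,
\[
 L_D(e)+\lambda k^U_X(e)\le L_D(e)+\lambda k^V_X(e)+\lambda c,
\]
and taking infima over \(\mathcal E_X\) gives \(\Phi^U_{D,X}(\lambda)\le\Phi^V_{D,X}(\lambda)+\lambda c\), and symmetrically. When both optima are finite, the two inequalities combine into the absolute-value bound. Finiteness is needed only so that the subtraction makes sense. In fact one optimum is finite exactly when the other is, because a finite objective requires some candidate with finite loss.

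\textbf{Main obstacle.} The only delicate point is the first step: the constant \(c\) must be uniform in the condition \(X\) as well as in the object. For prefix machines this uniformity is standard, since the simulating program does not depend on the conditional input. I would still state explicitly that \(U\) and \(V\) are optimal in the conditional sense, meaning they are universal among conditional prefix machines, so that the same \(c\) serves for every \(X\). I would also take \(c\) as the maximum of the two one-sided simulation constants, so that a single \(c\) works in both directions.
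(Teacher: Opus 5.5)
Your proposal is correct and follows the paper's route. Both apply the invariance theorem for conditional prefix complexity, uniformly in \(e\) and \(X\), and insert the bound first into the feasibility constraints of the profiles and then into the scalar objectives. Your added points are accurate elaborations of the paper's brief proof: the constant must be uniform in the condition, it can be taken as the maximum of the two one-sided simulation constants, and one scalar optimum is finite exactly when the other is.
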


\begin{proof}
The invariance theorem supplies a constant \(c\) for which
\(|K_U(e\mid X)-K_V(e\mid X)|\le c\), uniformly in \(e\) and \(X\).  Insert
this bound into the constraints defining the two profiles to obtain the first
four inequalities.  Insert it into the scalar objectives and take infima to
obtain the last.
\end{proof}

The invariance constant does not make a choice from finite data canonical.  A
selector that can actually be run must still fix a machine, or at least a
computable prefix code.  Let
\(\mathcal C_X^{\mathrm{code}}\) be a declared family of valid codes with one
uniform prefix decoder, and let \(\ell(c\mid X)\) be the explicit code length.
It may minimize
\[
 L_D(\operatorname{Link}(X,c))+\lambda\ell(c\mid X),
 \qquad c\in\mathcal C_X^{\mathrm{code}}.
\]
If the list of codes is explicit and finite, the linker is total, and the loss
is total and rational valued, exhaustive evaluation decides the minimum.  For
general computable reals one may obtain only certified intervals, unless a gap
separating the candidates is known.  With an enumerable family or partial
execution, a search that may be stopped at any time cannot in general certify
that its current answer is globally best.

An explicit code length bounds the prefix complexity of the linked extension,
up to the constant cost of its decoder.  But it scores derivations, not
isomorphism classes.  Several distinct codes may produce the same extension,
and their multiplicity changes the induced prior.  Without further assumptions,
one therefore gets no general guarantee of approximation to the ideal
Kolmogorov frontier.

\section{Conclusion}

An admissible pushout along a monic boundary keeps the base intact and recovers
the interface as the exact place where old and new syntax meet.  The attachment
reflection organizes these pushouts into a functor: pointed cofibration
extensions sit reflectively inside attachment spans, and finite coproducts
describe modules linked in parallel.  Semantic exactness is the extra
condition that carries monicity from syntax into behavior.  Effective linking
and the fully faithful construction of degenerate attachments translate codes
in both directions.  Unrestricted attachment codes and extension codes thus
trace the same profiles of loss against complexity, up to uniform additive
constants, while the scalar envelope turns those profiles into a regularized
choice.

The construction is deliberately conditional.  For a concrete executable
language, one must still prove that the chosen pushouts are admissible and that
the semantics is exact, or at least conservative in the required sense.  A
meaningful preference for modularity also demands a restricted grammar of
primitive cells, one that excludes the degenerate attachments in the image of
\(D_X\).  Those two tasks are where the character of a particular language,
and the substance of a particular model-selection argument, finally enter.


{\small
\bibliographystyle{plain}
\bibliography{inverse_colimits_executable_models}
}

\end{document}